\newtheorem{theorem}{Theorem}
\newtheorem{corollary}{Corollary}
\renewcommand{\d}{~\text{d}}
\renewcommand{\ae}{\text{a\hspace{0.2pt}e}}
\newcommand{\rect}{\text{rect}}
\newcommand{\nae}{\not\stackrel{\text{a\hspace{0.2pt}e}}{=}}
\title{On the Capacity of Waveform Channels Under Square-Law Detection of Time-Limited Signals}
\author{Amir Tasbihi~\IEEEmembership{Graduate Student
Member, IEEE} and Frank R.
Kschischang~\IEEEmembership{Fellow, IEEE}\thanks{The authors
are with the Edward S. Rogers Sr.\ Dept.\ of Electrical and
Computer Engineering, University of Toronto, Toronto, ON M5S
3G4, Canada.  Email:
\texttt{\{tasbihi,frank\}@ece.utoronto.ca.}. Submitted to {\it IEEE Trans. Inf. Theory}, January 8, 2020.}}
\begin{document}
\maketitle
\begin{abstract}
	\label{sec:abstract}
	Capacity bounds for waveform channels under square-law
	detection of time-limited complex-valued signals are
	derived. The upper bound is the capacity of the channel
	under (complex-valued) coherent detection. 
	The lower bound is one bit less, per dimension,
        than the upper bound.
\end{abstract}

\begin{section}{Introduction}
	\label{sec:introduction}
	\IEEEPARstart{S}{quare-law} detection (SLD) decides
	based on the squared
	magnitude of the received
	complex-valued waveform, contrasting with  
	coherent detection, in which the decision
	is based upon the received
	complex-valued waveform.
	The former 
	appears in many fields, {\it e.g.,} short-haul 
	fiber-optic communication systems~\cite{hecht},
	astronomical imaging~\cite{ligo},
	X-ray crystallography~\cite{xray},
	etc.

	As the measurement in SLD 
	depends on the magnitude of the received complex-valued signal, 
	it is often thought that 
	half of the degrees of freedom for data transmission are lost,
	when using this type of detection.
	Specifically, given a non-negative waveform $s(t)$, there are
	many complex-valued waveforms $y(t)$
	such that $|y(t)|^2=s(t)$. 
	Under some conditions on $y(t)$, there are algorithms
	that retrieve the phase of $y(t)$ from $s(t)$. This issue
	is well studied in the literature
	on {\it phase retrieval}, {\it e.g.,}
	see \cite{phase1,phase2,phase3,phase4,phase5,phase6}.

	Although studying the number of bandlimited $T$--periodic
	complex functions with the same magnitude goes back more than half a
	century~\cite{walther}, its direct consequence in finding a capacity
	lower-bound for SLD of bandlimited
	signals is recent~\cite{mecozzi}. Specifically, it was 
	shown in~\cite{mecozzi} that by using SLD,
	at most 1 bit per degree of freedom is lost,
	in comparison with 
	complex-valued coherent detection,
	which suggests that noncoherent detection
	may remain a viable approach for emerging applications
	in short-haul fiber-optic communication systems.

	In practice, signals are time-limited and it is the purpose of this
	paper to find the relative capacity of channels under SLD
	of time-limited signals in comparison with
	complex-valued coherent detection.
	
	We adopt
	a similar method as in~\cite{mecozzi}, except that we 
	use a weaker condition for distinguishability of two signals.
	Two functions $y_1$ and $y_2\in\mathbb{C}^\mathbb{R}$ 
	are said to be {\it equal almost everywhere} (a.e.),
	written $y_1\overset{\ae}{=}y_2$, if
	\[\int_{\mathbb{R}}\left|y_1(t)-y_2(t)\right|^2\d t=0;\]
	when $y_1$ and $y_2$ are not equal a.e., we write 
	$y_1\nae y_2$.
	It can be shown that almost-everywhere equality is 
	an equivalence relation.
	Two functions $y_1$ and $y_2\in\mathbb{C}^\mathbb{R}$ 
	are said to be
	{\it equal up to a phase offset}, written 
	$y_1\overset{\phi}{\sim} y_2$, if there is a $\phi\in[-\pi,\pi)$ such
	that $y_1\overset{\ae}{=}\exp(i\phi)y_2$.
	When $y_1$ and $y_2$ are not equal up to a phase offset,
	then we write $y_1\overset{\phi}{\nsim}y_2$.
	Note that $y_1\overset{\ae}{=}y_2$ implies 
	$y_1\overset{\phi}{\sim}y_2$, but not conversely.
	The relation $\overset{\phi}{\sim}$ is obviously
	reflexive and symmetric, and transitivity follows from
	the Cauchy-Schwarz inequality; thus $\overset{\phi}{\sim}$
	is an equivalence relation.
	
	The authors of~\cite{mecozzi} assume that a coherent
	detector can distinguish $y_1$ from $y_2$ if and only
	if $y_1\overset{\phi}{\nsim}y_2$. Here, we assume the 
	relaxed condition that $y_1$ and $y_2$ are distinguishable by
	a coherent detector if and only if $y_1\nae y_2$.

	The rest of the paper is organized as follows. The problem setup is introduced
	in Sec.~\ref{sec:setup}.
	In Sec.~\ref{sec:blaschke}, some complex analysis tools
	are introduced, to be used 
	in Sec.~\ref{sec:capacity_bounds} in finding capacity bounds
	of channels under SLD relative to 
	coherent detection. In parallel to the 1-bit capacity gap
	for the bandlimited signals, which is established by~\cite{mecozzi},
	we derive the same gap for time-limited signals in Sec.~\ref{sec:capacity_bounds}.  
	In Sec.~\ref{sec:discussions}, the paper concludes with a brief
	discussion of how these results can be generalized.

	Through this paper, $\mathbb{N}, \mathbb{R}$, $\mathbb{R}^+$ and $\mathbb{C}$ denote the 
	set of non-negative integers, real, non-negative real, and complex numbers, respectively.
	The reciprocal conjugate of $\alpha\in\mathbb{C}$ is denoted by $\alpha^{-\ast}$; hence $\alpha^{-\ast}=(\alpha^\ast)^{-1}$.
	The polynomial ring over $\mathbb{C}$ is denoted by $\mathbb{C}[z]$, and for an integer $n$,
	$\mathbb{C}^{\leq n}[z]$ denotes the set of polynomials in $\mathbb{C}[z]$
	of degree at most $n$.
	The unit circle, {\it i.e.},
	$\{z\in\mathbb{C}~:~|z|=1\}$,
	is denoted by $\mathbb{T}$,
	$\mathbb{D}$ denotes the open unit disk, {\it i.e.},
		$\mathbb{D}\triangleq\{z\in\mathbb{C}~:~|z|<1\}$,
	$\overline{\mathbb{D}}$ denotes the closure of $\mathbb{D}$,
		{\it i.e.,} $\overline{\mathbb{D}}\triangleq \mathbb{D}\cup\mathbb{T}$, and 
	$\mathcal{A}(\mathbb{D})$ denotes the set of analytic functions on
		$\mathbb{D}$ that extend continuously to $\overline{\mathbb{D}}$.
		Finally, the rectangular function is defined as
	\[\rect(t)=\left\{\begin{array}{lc}
		1, & 0\leq t<1;\\
		0, & \text{otherwise.}
	\end{array}\right.\]
\end{section}

\begin{section}{Problem Setup}
	\label{sec:setup}
	A complex-valued signal, $x(t)$, whose support
	is a subset of $[0,1)$ is transmitted over a channel, 
	and a complex-valued signal, $y(t)$, whose support 
	is a subset of $[0,1)$, is received. Note that the supports
	of $x$ and $y$ might be different; for example,  
	channel dispersion might broaden the support of $y$
	in comparison with $x$, or the channel might compress the support. 
	The choice of support interval does not affect the generality of the 
	results of the paper, as is explained in Sec.~\ref{sec:discussions}.
	
	We
	assume that $x$ and $y\in \mathcal{L}^4[0,1)$, {\it i.e.,}
	$\int_{0}^{1}|x(t)|^4\d t<\infty$, and similarly for $y$. 
	The reason for this choice of function space will be clarified
	later in this section.

	Two receivers are compared.
	The {\it coherent receiver} decides on the transmitted waveform by observing
	$y$, while the {\it SLD receiver} decides on the transmitted waveform
	by observing
	$s(t)\triangleq |y(t)|^2$. Since $y\in\mathcal{L}^4[0,1)$, the waveform
	$s$ belongs to $\mathcal{L}^2[0,1)$, {\it i.e.,} $\int_{0}^{1}|s(t)|^2\d t<\infty$.
	The relationships among $x, y$, and $s$
	are shown in Fig~\ref{fig:model}.

\begin{figure}
\centering
	\includegraphics[scale=1]{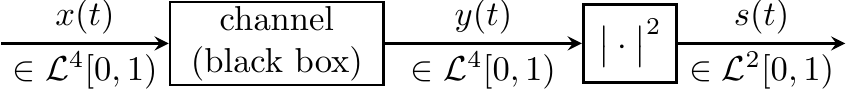}
	\caption{The system model}
	\label{fig:model}
\end{figure}

	As $y(t)$ is time-limited to $[0,1)$, we may assume that
	$y(t)=y_p(t)\rect(t),$
	where
	\[y_p(t)=\sum_{k=-\infty}^{\infty}y(t-k),\]
	is the periodic extension of $y(t)$ with period $1$.  

	According to Carleson's theorem~\cite{carleson},
	if a signal is in $\mathcal{L}^2[0,1)$ then its
	periodic extension is equal a.e. to its Fourier series.
	Note that 
	$\mathcal{L}^4[0,1)\subset\mathcal{L}^2[0,1)$~\cite{lp};
	as a result,
	\begin{equation*}
		y_p(t)\overset{\ae}{=}\sum_{k=-\infty}^{\infty}b_ke^{i2\pi kt},
		\label{eq:convergence}
	\end{equation*}
	where 
	\[b_k= \int_{0}^{1}y(t)e^{-i2\pi kt}.\]
	We can write $y_p(t)$ as $y_p(t)\overset{\ae}{=}\lim_{m\rightarrow\infty}y_{p,m}(t)$,
	in which 
	\[y_{p,m}(t)\triangleq\sum_{k=-m}^{m}b_ke^{i2\pi kt},\]
	is a truncated Fourier series.
	Writing $y(t;m)\triangleq y_{p,m}(t)\rect(t)$, 
	we then have 
	$y(t)\overset{\ae}{=}\lim_{m\rightarrow\infty}y(t;m).$
	Note that there is a one-to-one correspondence 
	between $y(t;m)$ and $\bm{y}^{2m+1}\triangleq(b_{-m},\ldots, b_m)\in\mathbb{C}^{2m+1}$.
	Similarly, let 
	\[x(t;m)\triangleq\left(\sum_{k=-m}^{m}a_ke^{i2\pi kt}\right)
	\rect(t),\quad a_k\in\mathbb{C},\]
	so that
	$x(t)\overset{\ae}{=}\lim_{m\rightarrow\infty}x(t;m).$ Then, we 
	can determine $x(t;m)$ uniquely from $\bm{x}^{2m+1}=(a_{-m},\ldots, a_m)\in\mathbb{C}^{2m+1}$.

	Square-law detection of $y(t;m)$ produces 
	$s(t;m)\triangleq|y(t;m)|^2$, which can be written as
	\begin{align*}
		s(t;m)&=\left|\left(\sum_{k=-m}^{m}b_ke^{i2\pi kt}\right)\rect(t)\right|^2\\
		&=\left(\sum_{k=-2m}^{2m}\sum_{\ell=\max(k-m,-m)}^{\min(k+m,m)}b_\ell b_{\ell-k}^\ast e^{i2\pi kt}\right)\rect(t)\\
		&=\left(\sum_{k=-2m}^{2m}c_ke^{i2\pi kt}\right)\rect(t),
	\end{align*}
	where
	\[c_k=\sum_{\ell=\max(k-m,-m)}^{\min(k+m,m)} b_\ell b_{\ell-k}^\ast.\]
	and we have used from this property that $|\rect(t)|^2=\rect(t)$. 
	Since $s(t;m)$ is a real-valued signal, we have $c_k=c_{-k}^\ast$.
	Similar to $\bm{x}^{2m+1}$ and $\bm{y}^{2m+1}$, there is a 
	one-to-one correspondence between $s(t;m)$ and
	$\bm{s}^{2m+1}=(c_0,\ldots, c_{2m})\in\mathbb{C}^{2m+1}$. 

	As $y\in \mathcal{L}^4[0,1)$, it implies that $s\in \mathcal{L}^2[0,1)$,
	which implies that
	$s(t)\overset{\ae}{=}\lim_{m\rightarrow\infty}s(t;m).$
	This is the reason that $y$ is considered to be
	in $\mathcal{L}^4[0,1)$, as in that case, $s$
	belongs to $\mathcal{L}^2[0,1)$ and Carleson's 
	theorem guarantees equality a.e. to $s(t;m)$, in the limit as
	$m\rightarrow\infty$.
		
	In summary, the system shown in Fig.~\ref{fig:model} behaves 
	like the system shown in Fig.~\ref{fig:periodic}, in the limit 
	as $m\rightarrow\infty$.\\
	\begin{figure}
	\centering
		\includegraphics[scale=1]{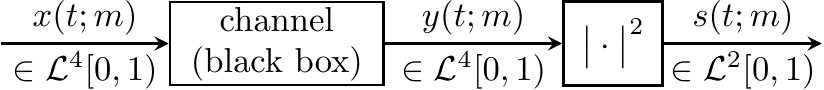}
		\caption{The actual system, shown in Fig.~\ref{fig:model},
		is equivalent to this system, when $m\rightarrow\infty$.} 
		\label{fig:periodic}
	\end{figure}

	The average mutual information between the time-limited functions
	$x(t;m)$ and $y(t;m)$ is defined as
	\[I_m(x(t;m);y(t;m))\triangleq\frac{I(\bm{x}^{2m+1};\bm{y}^{2m+1})}{2m+1},\]
	and similarly, for $x(t;m)$ and $s(t;m)$ as
	\begin{equation}
		I_m(x(t;m);s(t;m))\triangleq\frac{I(\bm{x}^{2m+1};\bm{s}^{2m+1})}{2m+1},
		\label{eq:infodef}
	\end{equation}
where $I(\cdot;\cdot)$ denotes the mutual information function. 
If the average mutual information per degree of freedom between
$x(t)$ and $y(t)$ exists, then it is given by~\cite[ch.~8]{gallager} 
\[I(x(t);y(t))=\lim_{m\rightarrow\infty}I_m(x(t;m);y(t;m)).\]
Similarly, if the average mutual information between $x(t)$ and 
$s(t)$ exists, then it can be written as
\[I(x(t);s(t))=\lim_{m\rightarrow\infty}I_m(x(t;m);s(t;m)).\]
Note that $I(x(t);y(t))$ and $I(x(t);s(t))$ are normalized to the 
number of used dimensions; as a result, they are similar to
the spectral efficiency under coherent detection and SLD,
respectively.

In this paper, we establish bounds for $I(x(t);s(t))$, in terms
of $I(x(t);y(t))$. To this aim, we establish bounds for
$I_m(x(t;m);s(t;m))$, in terms of $I_m(x(t;m);y(t;m))$ and 
we then let $m\rightarrow\infty$.\\ 

\end{section}

\begin{section}{On Blaschke Products}
	\label{sec:blaschke}
	To find a capacity lower-bound for the system shown in
	Fig.~\ref{fig:periodic}, we require some tools from complex analysis. 

		For $\alpha\in\mathbb{D}$, the {\it Blaschke factor},
		$B_\alpha:\overline{\mathbb{D}}\rightarrow\overline{\mathbb{D}}$, is defined as 
		\[B_\alpha(z)\triangleq\frac{\alpha-z}{1-\alpha^\ast z}.\]  
		Given a sequence $\alpha_1, \alpha_2,\ldots\in\mathbb{D}$, such that
		\[\sum_{k}(1-|\alpha_k|)<\infty,\]
		and $\tau\in\mathbb{T}$, the {\it Blaschke product}, $B(z)$, is
		defined as
		\[B(z)=\tau\prod_{k}B_{\alpha_k}(z).\]
		Furthermore, if $k$ is bounded above, then $B(z)$
		is called a {\it finite Blaschke product}. In general,
		a finite Blaschke product takes the form 
		\begin{equation}
			B(z)=\tau z^{n_0}\prod_{k=1}^{p}B_{\gamma_k}^{n_k}(z),
			\label{eq:bl1}
		\end{equation}
		for some finite $p\in\mathbb{N}$, some $\tau\in\mathbb{T}$,
		some distinct $\gamma_1,\ldots,\gamma_p\in\mathbb{D}\backslash\{0\}$,
		and some $n_0,\ldots,n_p\in\mathbb{N}$. The $z^{n_0}$ factor 
		in (\ref{eq:bl1}) corresponds to the Blaschke factor $B_0(z)=-z$.

		For any $\alpha\in\mathbb{D}$ and any $z\in\mathbb{T}$ we have $|B_\alpha(z)|^2=1$;
		as a result $B_\alpha(z)\in\mathbb{T}$. Consequently, 
		any Blaschke product maps the unit circle to itself.
		
		For a polynomial $f\in\mathbb{C}[z]$
		let $Z_f=\{\alpha\in\mathbb{C}~:~f(\alpha)=0\}$
		be the zero set of $f$.
		Let 
		\[Z_f'=\left\{\alpha\in\mathbb{D}\backslash\{0\}~:~f(\alpha)f\left(\alpha^{-\ast}\right)=0\right\},\]
		be the set of points, $\alpha\in\mathbb{D}\backslash\{0\}$,
		such that either $f(\alpha)=0$ or $f(\alpha^{-\ast})=0$.
		Finally, let $Z_f''=Z_f\cap\mathbb{T}$
		be the set of zeros of $f$ that are on the unit circle.
		We extend the usual notion of root-multiplicity to
		the entire complex plane as follows.
		For an arbitrary
		$\alpha\in\mathbb{C}$,
		let $d_f(\alpha)$ be the multiplicity of $\alpha$ as a root of $f$; 
		if $\alpha\notin Z_f$, then let $d_f(\alpha)=0$.

		The following theorem plays an important role in proving the subsequent 
		theorems. 

	\begin{theorem}
		(Fatou) If $f(z)\in\mathcal{A}(\mathbb{D})$ and
		$f(\mathbb{T})\subseteq\mathbb{T}$,
		then $f$ is a finite Blaschke product.
		\label{thm:fatou}
	\end{theorem}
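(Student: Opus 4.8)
The plan is to peel the zeros of $f$ off with a finite Blaschke product and show the leftover factor is a unimodular constant. First note that $f$ is not identically zero, since $|f|=1$ on $\mathbb{T}$; hence, being continuous on the compact set $\overline{\mathbb{D}}$ and of modulus $1$ on $\mathbb{T}$, the maximum modulus principle gives $|f(z)|\leq 1$ for all $z\in\mathbb{D}$. It also has no zeros on $\mathbb{T}$, again because $|f|=1$ there.

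The key structural step is that $f$ has only finitely many zeros in $\mathbb{D}$. Since $f\not\equiv 0$, its zero set is discrete in $\mathbb{D}$; were it infinite, compactness of $\overline{\mathbb{D}}$ would furnish an accumulation point $w\in\overline{\mathbb{D}}$. That point cannot lie in $\mathbb{D}$, for then the identity theorem would force $f\equiv 0$; and it cannot lie on $\mathbb{T}$, for then continuity of $f$ at $w$ would give $f(w)=0$, contradicting $|f(w)|=1$. So list the zeros of $f$ in $\mathbb{D}$, with multiplicity, as $\alpha_1,\ldots,\alpha_n$ (allowing $n=0$), and set $B(z)=\prod_{k=1}^{n}B_{\alpha_k}(z)$, a finite Blaschke product of the form (\ref{eq:bl1}) after grouping repeated factors and separating any factor at the origin.

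Then I would analyse $g=f/B$. Because the zeros of $B$ in $\mathbb{D}$ match those of $f$ exactly, with multiplicity, $g$ extends to a function analytic and zero-free on $\mathbb{D}$; and since $B$ is continuous and nonvanishing on $\overline{\mathbb{D}}\setminus\{\alpha_1,\ldots,\alpha_n\}$ while the $\alpha_k$ are interior points where $g$ has removable singularities, $g$ in fact extends continuously to all of $\overline{\mathbb{D}}$. On $\mathbb{T}$ we have $|B|=1$, hence $|g|=|f|=1$ there. Applying the maximum modulus principle to $g$ gives $|g|\leq 1$ on $\mathbb{D}$, and applying it to $1/g$ (legitimate since $g$ is zero-free on $\mathbb{D}$ and continuous on $\overline{\mathbb{D}}$) gives $|g|\geq 1$ on $\mathbb{D}$; thus $|g|\equiv 1$ on $\mathbb{D}$. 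An analytic function of constant modulus on a domain is constant, so $g\equiv\tau$ for some $\tau\in\mathbb{T}$, whence $f=\tau\prod_{k=1}^{n}B_{\alpha_k}$ is a finite Blaschke product.

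I expect the finiteness-of-zeros step to be the main obstacle, as it is where continuity up to the boundary together with $|f|=1$ on $\mathbb{T}$ is genuinely used; the rest is a routine double application of the maximum modulus principle, modulo the small verification that the removable singularities of $g$ at the interior points $\alpha_k$ do not obstruct continuity on $\overline{\mathbb{D}}$.
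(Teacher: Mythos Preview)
Your argument is correct and is in fact the standard proof of this result: peel off the (necessarily finitely many) zeros with a finite Blaschke product, then use the maximum modulus principle on the quotient and its reciprocal to force it to be a unimodular constant. Each step is sound, including the finiteness-of-zeros argument, which correctly uses both the identity theorem (to rule out interior accumulation) and the boundary hypothesis $|f|=1$ on $\mathbb{T}$ together with continuity on $\overline{\mathbb{D}}$ (to rule out boundary accumulation).

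By way of comparison, the paper does not actually prove this theorem: its ``proof'' consists solely of citations to \cite[Theorem~3.5.2]{garcia} and \cite{fatou}. Your write-up is therefore more self-contained than what appears in the paper, and the approach you give is essentially the one found in the cited textbook.
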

	
	\begin{proof}
		See~\cite[Theorem 3.5.2]{garcia} and~\cite{fatou}.
	\end{proof}

	The next theorem gives a necessary and sufficient condition
	for two nonzero complex polynomials to have a constant magnitude-ratio
	on the unit circle.

	\begin{theorem}
		\label{thm:samemag}
		Let $f$ and $g\in\mathbb{C}[z]$ be two nonzero polynomials. Then $|f(z)|=\kappa|g(z)|$ for all
		$z\in\mathbb{T}$ and for some $\kappa\in\mathbb{R}^+$ if and only if
		for all $z\in\mathbb{C}\backslash\{0\}$,
		\begin{equation}
			d_f(z)+d_f(z^{-\ast})=d_g(z)+d_g(z^{-\ast}).
			\label{eq:cst_ratio}
		\end{equation}
	\end{theorem}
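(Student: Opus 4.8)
The plan is to turn the modulus relation on $\mathbb{T}$ into an honest polynomial identity by means of the reciprocal–conjugate polynomial. For $f\in\mathbb{C}[z]$ of degree $n$, I would set $\widetilde f(z)\triangleq z^n\overline{f(1/\bar z)}$; expanding $f(z)=\sum_{j=0}^n a_jz^j$ shows $\widetilde f(z)=\sum_{j=0}^n\overline{a_j}\,z^{n-j}\in\mathbb{C}[z]$, and factoring $f$ over its roots shows that the zeros of $\widetilde f$ are precisely the points $\alpha^{-\ast}$, $\alpha$ ranging over the \emph{nonzero} roots of $f$, with the same multiplicities; thus $d_{\widetilde f}(w)=d_f(w^{-\ast})$ for all $w\in\mathbb{C}\backslash\{0\}$, while $d_{\widetilde f}(0)=0$ and $\deg\widetilde f=n-d_f(0)$. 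The second ingredient is that on $\mathbb{T}$ we have $1/\bar z=z$, so $\overline{f(z)}=z^{-n}\widetilde f(z)$, and hence $|f(z)|^2=z^{-n}f(z)\widetilde f(z)$ for $z\in\mathbb{T}$, and likewise $|g(z)|^2=z^{-m}g(z)\widetilde g(z)$ with $m=\deg g$.

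For the ``only if'' part, assume $|f(z)|=\kappa|g(z)|$ on $\mathbb{T}$; since $f$ is a nonzero polynomial one gets $\kappa>0$. Substituting the two identities above and multiplying through by $z^{m+n}$ shows that the polynomial $z^m f\widetilde f-\kappa^2 z^n g\widetilde g$ vanishes on all of $\mathbb{T}$, an infinite set, so it is the zero polynomial: $z^m f\widetilde f=\kappa^2 z^n g\widetilde g$ in $\mathbb{C}[z]$. Reading off the multiplicity of an arbitrary $w\ne0$ from both sides — where the monomial prefactors contribute nothing — gives $d_f(w)+d_{\widetilde f}(w)=d_g(w)+d_{\widetilde g}(w)$, which is exactly (\ref{eq:cst_ratio}) once one uses $d_{\widetilde f}(w)=d_f(w^{-\ast})$.

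For the converse, assume (\ref{eq:cst_ratio}). I would set $P(z)=z^m f(z)\widetilde f(z)$ and $Q(z)=z^n g(z)\widetilde g(z)$, and note that for every $w\ne0$, $d_P(w)=d_f(w)+d_f(w^{-\ast})=d_g(w)+d_g(w^{-\ast})=d_Q(w)$ by hypothesis; hence $P$ and $Q$ have the same zeros and multiplicities throughout $\mathbb{C}\backslash\{0\}$, so $P/Q$ is, as a rational function, $c\,z^{k}$ for a nonzero constant $c$ and some $k\in\mathbb{Z}$. On the other hand, on $\mathbb{T}$ one has $P(z)=z^{m+n}|f(z)|^2$ and $Q(z)=z^{m+n}|g(z)|^2$, so $P(z)/Q(z)=|f(z)|^2/|g(z)|^2\ge0$ at the (dense in $\mathbb{T}$) points where $g(z)\ne0$; since $c\,z^{k}$ is continuous on $\mathbb{T}$, this forces $c\,z^{k}\ge0$ on all of $\mathbb{T}$, whence $k=0$ and $c>0$. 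Therefore $|f(z)|^2=c\,|g(z)|^2$ on a dense subset of $\mathbb{T}$ and, by continuity, $|f(z)|=\sqrt{c}\,|g(z)|$ for every $z\in\mathbb{T}$.

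The step I expect to need the most care is the bookkeeping at $z=0$ (equivalently $z=\infty$): hypothesis (\ref{eq:cst_ratio}) says nothing about $d_f(0)$ versus $d_g(0)$, nor about $\deg f$ versus $\deg g$, and these ``invisible'' discrepancies are exactly what the prefactors $z^m,z^n$ absorb; in the converse this is why one is left with a monomial $c\,z^k$ rather than a constant, and the nonnegativity of $|f|^2/|g|^2$ on $\mathbb{T}$ is precisely what is needed to force $k=0$. One also has to confirm nothing degenerates when $f$ or $g$ vanishes at the origin, which the identities $d_{\widetilde f}(0)=0$, $\deg\widetilde f=n-d_f(0)$ take care of. Beyond that, the argument only uses that a polynomial with infinitely many zeros is identically zero together with routine arithmetic of multiplicities; in particular Fatou's theorem (Theorem~\ref{thm:fatou}) is not required here, although one could alternatively phrase the proof via finite Blaschke products.
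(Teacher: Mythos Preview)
Your proof is correct and takes a genuinely different route from the paper's.

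The paper's forward direction constructs the Blaschke product $B(z)$ from the zeros of $f$ in $\mathbb{D}$, forms $H(z)=\kappa g(z)B(z)/f(z)$, checks that $H\in\mathcal{A}(\mathbb{D})$ with $H(\mathbb{T})\subseteq\mathbb{T}$, and then invokes Fatou's theorem (Theorem~\ref{thm:fatou}) to conclude that $H$ is a finite Blaschke product; the multiplicity condition is then extracted from the resulting polynomial identity. For the converse, the paper factors $f$ and $g$ via $Z_f'$ and $Z_f''$, writes $K(z)=f(z)/g(z)$ explicitly as a product of Blaschke factors times a constant, and reads off $|K(z)|=\text{const}$ on $\mathbb{T}$.

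You instead use the reciprocal--conjugate polynomial $\widetilde f(z)=z^{\deg f}\,\overline{f(1/\bar z)}$ and the identity $|f(z)|^2=z^{-\deg f}f(z)\widetilde f(z)$ on $\mathbb{T}$, which turns the modulus hypothesis directly into the polynomial identity $z^m f\widetilde f=\kappa^2 z^n g\widetilde g$; the multiplicity condition then drops out immediately from $d_{\widetilde f}(w)=d_f(w^{-\ast})$. For the converse you compare $P=z^m f\widetilde f$ and $Q=z^n g\widetilde g$, obtain $P/Q=c\,z^k$, and use nonnegativity of $|f|^2/|g|^2$ on $\mathbb{T}$ to force $k=0$, $c>0$. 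This is more elementary: it avoids Fatou's theorem entirely and needs only unique factorisation in $\mathbb{C}[z]$ together with the fact that a polynomial vanishing on an infinite set is identically zero. The paper's route, on the other hand, makes the Blaschke structure of $f/g$ explicit (equation~(\ref{eq:h2})), which it reuses immediately in the proof of Theorem~\ref{thm:cardinality2}; your argument would require recovering that structure separately if one wanted it. Your handling of the bookkeeping at $0$ and $\infty$ via the prefactors $z^m,z^n$ and the positivity argument is clean and correct.
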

	
	\begin{proof}
		Suppose that $|f(z)|=\kappa|g(z)|$ 
		for a $\kappa\in\mathbb{R}^+$ and for any $z\in\mathbb{T}$.
		Let
		\begin{equation}
			B(z)=\prod_{\alpha \in Z_f\cap\mathbb{D}}B_\alpha^{d_f(\alpha)}(z)
			\label{eq:bz}
		\end{equation}
		be the Blaschke product produced by the zeros of $f$
		that are inside the unit disk. Furthermore, let
		\[H(z)\triangleq\left\{\begin{array}{lc}
			\frac{\kappa g(z)B(z)}{f(z)}, & z\in\mathbb{D};\\
			& \\
			\underset{\underset{w\in\mathbb{D}}{w\rightarrow z}}{\lim}
			\frac{\kappa g(w)B(w)}{f(w)}, & z\in\mathbb{T}.
		\end{array}\right.\]
		As $B(z)$ is a Blaschke product, it maps the unit circle to itself. Furthermore, $|f(z)|=\kappa|g(z)|$ for all $z\in\mathbb{T}$;
		consequently, $H(\mathbb{T})\subseteq\mathbb{T}$. In addition to that, the zeros of $f$ that 
		are in $\mathbb{D}$ are cancelled by $B(z)$; as a result, $H(z)\in\mathcal{A}(\mathbb{D})$. 
		By Theorem~\ref{thm:fatou}, it follows that $H(z)$ can be written
		as a finite Blaschke product, {\it i.e.}, 
				\begin{equation}
					H(z)=\tau z^{n_0}\prod_{k=1}^{p}B_{\gamma_k}^{n_k}(z),
					\label{eq:bl}
				\end{equation}
		for some finite $p\in\mathbb{N}$, some
		$\tau\in\mathbb{T}$, some distinct $\gamma_1,\ldots,\gamma_p\in\mathbb{D}\backslash\{0\}$,
		and some $n_0,\ldots,n_p\in\mathbb{N}$.
		As $\kappa g(z)B(z)=f(z)H(z)$, by substituting
		$B(z)$ from (\ref{eq:bz}) and $H(z)$ from (\ref{eq:bl}) and then multiplying
		by the denominator polynomials of $B(z)$ and $H(z)$, we have
				\begin{align}
					\kappa g(z)\prod_{\alpha\in Z_f\cap\mathbb{D}}
					(\alpha-z)^{d_f(\alpha)}\prod_{k=1}^{p}
					(1-\gamma_k^\ast z)^{n_k}=\tau f(z) z^{n_0}\prod_{\alpha\in
					Z_f\cap\mathbb{D}}(1-\alpha^\ast z)^{d_f(\alpha)}
					\prod_{k=1}^{p}(\gamma_k-z)^{n_k}.
					\label{eq:gfh3}
				\end{align}
				Let
				\[\begin{array}{ll}
					g'(z)\triangleq\underset{\alpha\in Z_f\cap\mathbb{D}}{\prod}(\alpha-z)^{d_f(\alpha)}, &
					g''(z)\triangleq\underset{k=1}{\overset{p}{\prod}}(1-\gamma_k^\ast z)^{n_k},\\
					f'(z)\triangleq\underset{\alpha\in Z_f\cap\mathbb{D}}{\prod}(1-\alpha^\ast z)^{d_f(\alpha)}, &
					f''(z)\triangleq\underset{k=1}{\overset{p}{\prod}}(\gamma_k-z)^{n_k};\\
				\end{array}\]
				then, we can write (\ref{eq:gfh3}) as 
				\begin{equation}
					\kappa g(z)g'(z)g''(z)=\tau z^{n_0}f(z)f'(z)f''(z).
					\label{eq:gfh4}
				\end{equation}
				The polynomials on both sides of (\ref{eq:gfh4}) must have the same 
				roots with the same multiplicities. 
				As a result,
				for all $z\in\mathbb{C}\backslash\{0\}$ we have
				\[d_g(z)+d_{g'}(z)+d_{g''}(z)=d_f(z)+d_{f'}(z)+d_{f''}(z),\]
				and consequently, 
				\begin{align}
					& d_g(z)+d_{g'}(z)+d_{g''}(z) + d_g(z^{-\ast})
					+ d_{g'}(z^{-\ast}) + d_{g''}(z^{-\ast}) =\nonumber\\
					& d_f(z)+d_{f'}(z)+d_{f''}(z) + d_f(z^{-\ast})
					+ d_{f'}(z^{-\ast}) + d_{f''}(z^{-\ast}).
					\label{eq:zz}
				\end{align}
				Note that $d_{g'}(z)=d_{f'}(z^{-\ast})$ and
				$d_{g''}(z)=d_{f''}(z^{-\ast})$, for all $z\in \mathbb{C}\backslash\{0\}$.
				As a result, (\ref{eq:zz}) simplifies to (\ref{eq:cst_ratio}).

				Conversely, assume that (\ref{eq:cst_ratio}) holds
				for all $z\in\mathbb{C}\backslash\{0\}$.
			If $\alpha\in Z_f\backslash\{0\}$, then $d_f(\alpha)+d_f(\alpha^{-\ast})>0$,
			which by (\ref{eq:cst_ratio}) implies that $d_g(\alpha)+d_g(\alpha^{-\ast})>0$.
			As a result, either $\alpha$ or $\alpha^{-\ast}$ belongs to $Z_g$,
			which implies that $Z_f'=Z_g'$.
			Furthermore, if $\alpha\in Z_f''$, then
			$\alpha^{-\ast}=\alpha$, which implies that $\alpha$ is a zero of $g$
			with the same multiplicity as of $f$. As a result, $Z_f''=Z_g''$.

			For some $a_f$ and $a_g\in\mathbb{C}\backslash\{0\}$
			and some $n_f$ and $n_g\in\mathbb{N}$ we have
			\begin{align*}
				f(z)=a_fz^{n_f}\prod_{\alpha\in Z_f'}(z-\alpha)^{d_f(\alpha)}
				\left(z-\alpha^{-\ast}\right)^{d_f(\alpha^{-\ast})}\prod_{\alpha\in Z_f''}(z-\alpha)^{d_f(\alpha)},
			\end{align*}
			and
			\begin{align*}
				g(z)=a_gz^{n_g}\prod_{\alpha\in Z_g'}(z-\alpha)^{d_g(\alpha)}
				\left(z-\alpha^{-\ast}\right)^{d_g(\alpha^{-\ast})}
				\prod_{\alpha\in Z_g''}(z-\alpha)^{d_g(\alpha)}.
			\end{align*}
			Let $K(z)\triangleq\frac{f(z)}{g(z)}$, then 
			\begin{align*}
				K(z)=\frac{a_f}{a_g}z^{n_f-n_g}\prod_{\alpha\in Z_f'}
				\frac{(z-\alpha)^{d_f(\alpha)-d_g(\alpha)}}{\left(z-\alpha^{-\ast}
				\right)^{d_g(\alpha^{-\ast})-d_f(\alpha^{-\ast})}}
				\prod_{\alpha\in Z_f''}
				\frac{(z-\alpha)^{d_f(\alpha)}}{(z-\alpha)^{d_g(\alpha)}},
			\end{align*}
			which, as $d_f(\alpha)=d_g(\alpha)$ for $\alpha\in Z_f''$, 
			can be simplified as 
			\begin{align}
				K(z)=\frac{a_f}{a_g}z^{n_f-n_g}\prod_{\alpha\in Z_f'}
				B_\alpha^{d_f(\alpha)-d_g(\alpha)}(z)(\alpha^\ast)^{d_f(\alpha)-d_g(\alpha)}.
				\label{eq:h2}
			\end{align}
			As a result, for all $z\in\mathbb{T}$, 
			\begin{equation}
				\left|K(z)\right|=\left|\frac{a_f}{a_g}\right|
				\prod_{\alpha\in Z_f'}
				\left|\alpha\right|^{d_f(\alpha)-d_g(\alpha)},
				\label{eq:ag}
			\end{equation}
			which is a constant number, independent of $z$. Due to 
			the definition of $K(z)$, we then have $|f(z)|=\kappa |g(z)|$,
			where $\kappa=|K(z)|$ is given in (\ref{eq:ag}).

	\end{proof}
	\begin{corollary}
		Let $f$ and $g\in\mathbb{C}[z]$ be nonzero polynomials
		such that $|f(z)|=\kappa|g(z)|$ for some
		$\kappa\in\mathbb{R}^+$ and for all $z\in\mathbb{T}$. Then
		$\deg(f)=\deg(g)$ if and only if $d_f(0)=d_g(0)$.
	\end{corollary}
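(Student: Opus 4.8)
The plan is to reduce everything to Theorem~\ref{thm:samemag} and then perform a careful root count. Since $|f(z)|=\kappa|g(z)|$ on $\mathbb{T}$, Theorem~\ref{thm:samemag} supplies (\ref{eq:cst_ratio}): $d_f(z)+d_f(z^{-\ast})=d_g(z)+d_g(z^{-\ast})$ for every $z\in\mathbb{C}\backslash\{0\}$. The key observation is that $\deg(f)$ equals $d_f(0)$ plus the number of \emph{nonzero} roots of $f$, counted with multiplicity; likewise for $g$. Hence it suffices to show that $f$ and $g$ have the same number of nonzero roots (with multiplicity), since then $\deg(f)-d_f(0)=\deg(g)-d_g(0)$, from which both directions of the claimed equivalence follow at once.

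First I would record the relevant partition of $\mathbb{C}\backslash\{0\}$: the punctured disk $\mathbb{D}\backslash\{0\}$, its image $\{z^{-\ast}:z\in\mathbb{D}\backslash\{0\}\}=\{w\in\mathbb{C}:|w|>1\}$ under reciprocal conjugation, and the unit circle $\mathbb{T}$, on which $z^{-\ast}=z$. These three sets are pairwise disjoint and exhaust $\mathbb{C}\backslash\{0\}$, and reciprocal conjugation is a fixed-point-free involution exchanging the first two. Consequently, summing $d_f(\alpha)+d_f(\alpha^{-\ast})$ over $\alpha\in\mathbb{D}\backslash\{0\}$ counts, each with its multiplicity, exactly the nonzero roots of $f$ that lie off $\mathbb{T}$, with no double counting and no omission; adding $\sum_{\alpha\in\mathbb{T}}d_f(\alpha)$ then accounts for the roots on $\mathbb{T}$. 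Since $f$ has finitely many roots, all sums are finite, and their total is precisely $\deg(f)-d_f(0)$. The same identity holds with $g$ in place of $f$.

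Next I would invoke (\ref{eq:cst_ratio}) twice. Summing it over $\alpha\in\mathbb{D}\backslash\{0\}$ gives $\sum_{\alpha\in\mathbb{D}\backslash\{0\}}(d_f(\alpha)+d_f(\alpha^{-\ast}))=\sum_{\alpha\in\mathbb{D}\backslash\{0\}}(d_g(\alpha)+d_g(\alpha^{-\ast}))$. Evaluating it at a point $\alpha\in\mathbb{T}$, where $\alpha^{-\ast}=\alpha$, yields $2d_f(\alpha)=2d_g(\alpha)$, hence $d_f(\alpha)=d_g(\alpha)$ for all $\alpha\in\mathbb{T}$, so $\sum_{\alpha\in\mathbb{T}}d_f(\alpha)=\sum_{\alpha\in\mathbb{T}}d_g(\alpha)$. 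Adding the two conclusions gives $\deg(f)-d_f(0)=\deg(g)-d_g(0)$, which is exactly the desired statement.

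Since the real work is done by Theorem~\ref{thm:samemag}, I do not anticipate a genuine obstacle; the only point requiring care is the bookkeeping in the root count, i.e.\ verifying that $z\mapsto z^{-\ast}$ swaps $\mathbb{D}\backslash\{0\}$ with $\{w:|w|>1\}$ without fixed points, so that each nonzero root off $\mathbb{T}$ is tallied exactly once. As an alternative to the root count, one could instead read the conclusion off the explicit quotient formula (\ref{eq:h2}) for $K(z)=f(z)/g(z)$: writing each $B_\alpha$ as a ratio of two degree-one polynomials, the paired zeros $\alpha,\alpha^{-\ast}$ (for $\alpha\in Z_f'=Z_g'$) contribute equal total degree to $f$ and to $g$ by (\ref{eq:cst_ratio}), the zeros on $\mathbb{T}$ contribute equal degree by the argument above, and the remaining factor $z^{n_f-n_g}$ forces $\deg(f)-\deg(g)=n_f-n_g=d_f(0)-d_g(0)$. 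I would present the root-counting version, as it is cleaner and avoids tracking how Blaschke factors split between numerator and denominator.
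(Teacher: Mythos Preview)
Your proof is correct and follows the same route as the paper: invoke Theorem~\ref{thm:samemag} and observe that the condition $d_f(z)+d_f(z^{-\ast})=d_g(z)+d_g(z^{-\ast})$ on $\mathbb{C}\backslash\{0\}$ forces the total count of nonzero roots (with multiplicity) of $f$ and $g$ to coincide, so any degree discrepancy must come from the root at $0$. The paper compresses this into a single sentence (``the difference between the degrees can only be due to the $z$ factor''), whereas you have written out the root-counting bookkeeping explicitly; the underlying argument is the same.
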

	\begin{proof}
 		If $\deg(f)\neq\deg(g)$ then, according to Theorem~\ref{thm:samemag},
		the difference between the degrees can only be due to the $z$ factor. 
		The converse proof is similar.  
	\end{proof}
	In Sec.~\ref{sec:introduction}, we introduced the equivalence relations
	$\overset{\ae}{=}$ and $\overset{\phi}{\sim}$ for functions taking real
	arguments. In parallel to that, we define similar relations for functions
	that have complex arguments. 
	Two functions $f$ and $g\in\mathbb{C}^{\mathbb{C}}$ are said to be 
	{\it equal almost everywhere}, written $f\overset{\ae}{=}g$, 
	if and only if
	\[\int_{\mathbb{C}}\left|f(z)-g(z)\right|^2\d z=0.\]
	Similarly, two functions $f$ and $g\in\mathbb{C}^{\mathbb{C}}$ are 
	said to be {\it equal up to a phase offset}, written $f\overset{\phi}{\sim}g$,
	if and only if there is a $\phi\in[-\pi,\pi)$ such that $f\overset{\ae}{=}e^{i\phi}g$.
	If $f$ and $g$ are not equal up to a phase offset, we write $f\overset{\phi}{\nsim}g$.
	The relations $\overset{\ae}{=}$
	and $\overset{\phi}{\sim}$ for functions in $\mathbb{C}^{\mathbb{C}}$
	are equivalence relations. 
	
	If two polynomials $f$ and $g\in\mathbb{C}[z]$ 
	are equal a.e., then they are identical, {\it i.e.,}
	$f\overset{\ae}{=}g$ implies $f=g.$

	The next theorem plays a key role in computing a lower bound
	for the capacity of the channel that outputs $s(t;m)$ (see Fig.~\ref{fig:periodic}).

	\begin{theorem}
		\label{thm:cardinality2}
		For every $n\in\mathbb{N}$, given $f\in\mathbb{C}^{\leq n}[z]$ and $\kappa\in\mathbb{R}$, let
		$S$ be any set of complex polynomials of degree at most $n$ for which 
		$h\overset{\phi}{\nsim}g$ for all $h$ and $g\in S$, and
		$|f(z)|=\kappa|g(z)|$ for all $z\in\mathbb{T}$.
		Then
		$|S|\leq 2^{n+1}.$
	\end{theorem}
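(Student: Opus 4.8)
The plan is to show that, once we quotient by the phase-offset relation $\overset{\phi}{\sim}$, each polynomial in $S$ is completely determined by the multiset of its zeros, and then to count how many such zero multisets are possible. We may assume $f\not\equiv 0$ and $\kappa>0$; the other cases are degenerate (the magnitude identity is then either unsatisfiable or forces every $g\in S$ to vanish identically, so $|S|\le 1$). Fix $g\in S$. Since $|f(z)|=\kappa|g(z)|$ on $\mathbb{T}$, Theorem~\ref{thm:samemag} gives
\[d_g(z)+d_g(z^{-\ast})=d_f(z)+d_f(z^{-\ast})\qquad\text{for all }z\in\mathbb{C}\setminus\{0\}.\]
Reading this identity according to the location of $z$ constrains the zeros of $g$ tightly: at $z\in\mathbb{T}$ (where $z^{-\ast}=z$) it forces $d_g(z)=d_f(z)$, so $g$ and $f$ have the same zeros on the unit circle; at a point $z$ with $z,z^{-\ast}\notin Z_f$ it forces $d_g(z)=0$, so off the circle $g$ can vanish only at zeros of $f$ and their reciprocal conjugates; for a pair $\{\alpha,\alpha^{-\ast}\}$ with $\alpha\in\mathbb{D}\setminus\{0\}$ it pins down only the sum $d_g(\alpha)+d_g(\alpha^{-\ast})=m_\alpha\triangleq d_f(\alpha)+d_f(\alpha^{-\ast})$; and at $z=0$ it says nothing, so $d_g(0)$ is free apart from the degree bound $\deg g\le n$.

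Next I would argue that a polynomial is determined up to $\overset{\phi}{\sim}$ by its zero multiset. Two polynomials with the same zeros differ by a nonzero constant, and for $g\in S$ the modulus of the leading coefficient of $g$ is itself forced by $f$, $\kappa$, and the numbers $d_g(\alpha)$ (this is exactly equation~(\ref{eq:ag}) from the proof of Theorem~\ref{thm:samemag}); the only remaining freedom is the argument of that constant, which $\overset{\phi}{\sim}$ identifies. Since the elements of $S$ are pairwise $\overset{\phi}{\nsim}$, distinct elements of $S$ must therefore have distinct zero data, so $|S|$ is at most the number of admissible choices of the multiplicities $d_g(0)$ and $\bigl(d_g(\alpha)\bigr)$ over the off-circle pairs.

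Finally I would count. Write $T$ for the total multiplicity of the zeros of $f$ on $\mathbb{T}$, let $\alpha_1,\dots,\alpha_r\in\mathbb{D}\setminus\{0\}$ represent the off-circle conjugate pairs of zeros of $f$, with $m_i\triangleq m_{\alpha_i}\ge 1$ and $P\triangleq\sum_i m_i$; then $\deg f=d_f(0)+T+P\le n$, so $n-T-P\ge 0$. For $g\in S$ the unit-circle part and the off-circle part of $g$ have degrees exactly $T$ and $P$, so $d_g(0)$ ranges over $\{0,1,\dots,n-T-P\}$, while for each pair $d_g(\alpha_i)$ ranges over $\{0,1,\dots,m_i\}$; therefore
\[|S|\ \le\ (n-T-P+1)\prod_{i=1}^{r}(m_i+1)\ \le\ 2^{\,n-T-P}\prod_{i=1}^{r}2^{\,m_i}\ =\ 2^{\,n-T}\ \le\ 2^{\,n+1},\]
using the elementary inequality $k+1\le 2^{k}$, valid for all integers $k\ge 0$. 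The step that needs the most care is precisely this bookkeeping: one must not overlook that $g$ may absorb an arbitrarily high-order zero at the origin without changing its modulus on $\mathbb{T}$, so those extra polynomials have to be counted too — but they are harmless because $\deg g\le n$ caps $d_g(0)$. Apart from that, the only non-mechanical ingredient is the realization that passing to the quotient by $\overset{\phi}{\sim}$ removes exactly the last continuous degree of freedom, reducing the problem to the finite count above.
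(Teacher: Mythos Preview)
Your proof is correct and mirrors the paper's structure: use Theorem~\ref{thm:samemag} to constrain the zero multiset of any $g\in S$ (zeros on $\mathbb{T}$ match those of $f$, each off-circle pair-sum $d_g(\alpha)+d_g(\alpha^{-\ast})$ is fixed, and the multiplicity at the origin is free up to the degree bound), then argue via~(\ref{eq:ag}) that the modulus of the remaining scalar is determined, so that quotienting by $\overset{\phi}{\sim}$ removes the last continuous parameter and leaves only a finite count. The one place you diverge is the final numerical estimate. The paper first pads the product with the harmless factors $\prod_{\alpha\in Z_f''}(d_f(\alpha)+1)$, applies the AM--GM inequality to the resulting $|Z_f'|+|Z_f''|+1$ factors, and then uses monotonicity of $x\mapsto(1+n/x)^x$ together with $|Z_f|\le n$ to arrive at $\bigl(\tfrac{2n+1}{n+1}\bigr)^{n+1}\le 2^{n+1}$; you instead bound each factor directly by the elementary inequality $k+1\le 2^k$. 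Your route is shorter, avoids the AM--GM/monotonicity detour entirely, and in fact gives the sharper conclusion $|S|\le 2^{\,n-T}\le 2^{\,n}$.
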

	\begin{proof}
		For a $g\in S$, let $K(z)=\frac{f(z)}{g(z)}$; then $K(z)$ can be written as in
		(\ref{eq:h2}).
		Note that by fixing $n_g$ and $d_g(\alpha)$ for all $\alpha\in Z_f'$,
		$|a_g|$ is determined uniquely by $\kappa$ from (\ref{eq:ag}).
		As $0\leq n_g\leq n-\deg(f)+d_f(0)$ and
		$0\leq d_g(\alpha)\leq d_f(\alpha)+d_f(\alpha^{-\ast})$ for all $\alpha\in Z_f'$, 
		then
		\begin{align*}
			|S|&\leq
			\left(n+1-\deg(f)+d_f(0)\right)
			\prod_{\alpha\in Z_f'}\left(d_f(\alpha)+d_f(\alpha^{-\ast})+1\right)\\
			&\leq 
			\left(n+1-\deg(f)+d_f(0)\right)
			\prod_{\alpha\in Z_f'}\left(d_f(\alpha)+d_f(\alpha^{-\ast})+1\right)
			\prod_{\alpha\in Z_f''}\left(d_f(\alpha)+1\right).
		\end{align*}
		By using the arithmetic-geometric-mean inequality we have
		\begin{align*}
			|S|\leq\left(\frac{n+|Z_f'|+|Z_f''|+1}{|Z_f'|+|Z_f''|+1}\right)^{|Z_f'|+|Z_f''|+1}
			\leq\left(\frac{n+|Z_f|+1}{|Z_f|+1}\right)^{|Z_f|+1},
		\end{align*}
		in which we have used from this property that
		\[\sum_{\alpha\in Z_f'}\left(d_f(\alpha)+
		d_f(\alpha^{-\ast})\right)+
		\sum_{\alpha\in Z_f\cap\mathbb{T}}d_f(\alpha)+d_f(0)=\deg(f).\]
		Note that $|Z_f|\leq n$ and $\left(\frac{x+\nu}{x}\right)^x$
		is an increasing function of $x$, for $x>0$; as a result,
		\[|S|\leq\left(\frac{2n+1}{n+1}\right)^{n+1}\leq 2^{n+1}.\]
	\end{proof}
\end{section}
\begin{section}{Capacity Relative to Coherent Detection}
	\label{sec:capacity_bounds}
	In this section, we find bounds for the average mutual
	information, defined in (\ref{eq:infodef}).

		For an $m\in\mathbb{N}$, let $V_m$ be the space over $\mathbb{C}$
		spanned by
		$\left\{1,\exp\left(\pm i2\pi t\right),\ldots,\exp\left(\pm i2\pi mt\right)\right\};$
		hence
		\[V_m=\left\{\sum_{k=-m}^{m}v_ke^{i2\pi kt}~:~v_k\in\mathbb{C}\right\}.\]
		Furthermore, for any $v(t)=\sum_{k=-m}^{m}v_ke^{i2\pi kt}\in V_m$, let
		\[P_v(z)\triangleq z^m\sum_{k=-m}^{m}v_kz^k.\]

	The next theorem shows that,
	up to a multiplication by a $\tau\in\mathbb{T}$, there are
	finitely many waveforms in $V_m$ that have the same magnitude 
	as $y_{p,m}(t)\in V_m$.

	\begin{theorem}
		Let $f(t)\in V_m$ be a non-zero function, and 
		let $S$ be any subset of $V_m$ such that
		$h\overset{\phi}{\nsim}g$ for all $h$ and $g\in S$,
		and $|g(t)|=|f(t)|$.
		Then 
		$|S|\leq 2^{2m+1}.$
		\label{thm:cardinality}
	\end{theorem}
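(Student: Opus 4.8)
The plan is to translate the trigonometric-polynomial statement about $V_m$ into the algebraic-polynomial statement of Theorem~\ref{thm:cardinality2} via the map $v\mapsto P_v$, apply that theorem, and then check that the map does not collapse the equivalence classes we care about. First I would observe that for $v(t)=\sum_{k=-m}^m v_k e^{i2\pi kt}\in V_m$ and $z=e^{i2\pi t}\in\mathbb{T}$ we have $P_v(z)=z^m v(t)$, so $|P_v(z)|=|v(t)|$ for all $z\in\mathbb{T}$. Hence the hypothesis $|g(t)|=|f(t)|$ for all $t$ becomes $|P_g(z)|=|P_f(z)|$ for all $z\in\mathbb{T}$, i.e.\ $\kappa=1$ in the language of Theorem~\ref{thm:cardinality2}. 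Moreover $P_v\in\mathbb{C}^{\leq 2m}[z]$, so we are exactly in the setting $n=2m$, which gives the bound $2^{n+1}=2^{2m+1}$ once the pairwise-inequivalence hypothesis is verified.

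The key remaining step is to show that $h\overset{\phi}{\nsim}g$ in $V_m$ (as functions of the real variable $t$) implies $P_h\overset{\phi}{\nsim}P_g$ in $\mathbb{C}[z]$ (as functions of the complex variable $z$), so that the set $\{P_g : g\in S\}$ genuinely satisfies the hypotheses of Theorem~\ref{thm:cardinality2} and, in particular, $|S|=|\{P_g:g\in S\}|$. I would argue the contrapositive: if $P_h\overset{\phi}{\sim}P_g$, then $P_h\overset{\ae}{=}e^{i\phi}P_g$ for some $\phi$; since $P_h$ and $e^{i\phi}P_g$ are polynomials, a.e.\ equality forces genuine equality $P_h(z)=e^{i\phi}P_g(z)$ (this is the fact, stated just above Theorem~\ref{thm:cardinality2}, that a.e.-equal polynomials are identical). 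Comparing coefficients of $z^{m+k}$ then yields $h_k=e^{i\phi}g_k$ for all $k\in\{-m,\dots,m\}$, hence $h(t)=e^{i\phi}g(t)$ identically, so $h\overset{\phi}{\sim}g$. Contrapositively, $h\overset{\phi}{\nsim}g\Rightarrow P_h\overset{\phi}{\nsim}P_g$. The same coefficient comparison shows $g\mapsto P_g$ is injective on $V_m$, so distinct elements of $S$ map to distinct polynomials and no cardinality is lost.

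With these two observations in place the proof is immediate: the image set $S'\triangleq\{P_g:g\in S\}\subseteq\mathbb{C}^{\leq 2m}[z]$ consists of pairwise $\overset{\phi}{\nsim}$ polynomials of degree at most $2m$, all satisfying $|P_f(z)|=|P_g(z)|$ on $\mathbb{T}$, so Theorem~\ref{thm:cardinality2} (with $n=2m$, $\kappa=1$) gives $|S|=|S'|\leq 2^{2m+1}$. I expect the only real subtlety to be the transfer of the $\overset{\phi}{\nsim}$ condition through $P_{(\cdot)}$ — one must be a little careful that the phase offset $\phi$ is the \emph{same} constant on both sides, which it is because $z^m$ is a common unimodular factor that cancels in the ratio $P_h/P_g = h/g$ wherever $g\neq 0$; everything else is bookkeeping.
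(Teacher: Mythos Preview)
Your proposal is correct and follows exactly the paper's own approach: define $S'=\{P_g:g\in S\}$, observe that $g\overset{\phi}{\nsim}h$ implies $P_g\overset{\phi}{\nsim}P_h$ so that $|S'|=|S|$, and then apply Theorem~\ref{thm:cardinality2} with $n=2m$. You have simply written out in full the details (the relation $P_v(e^{i2\pi t})=e^{i2\pi m t}v(t)$, the contrapositive for transferring $\overset{\phi}{\nsim}$, and injectivity of $v\mapsto P_v$) that the paper leaves implicit.
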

	\begin{proof}
		The proof is similar to the proofs given
		in~\cite{walther,mecozzi}. Specifically,
		let $S'=\{P_g(z)~:~g(t)\in S\}$; clearly $|S'|=|S|$.
		Note that for all $g$ and $h\in S$,
		$g\overset{\phi}{\nsim}h$
		implies $P_g\overset{\phi}{\nsim}P_h$. As a result, by
		Theorem~\ref{thm:cardinality2}, $|S'|\leq 2^{2m+1}$.
	\end{proof}

	Let $Q_m:[-\pi,\pi)\rightarrow\left\{0,\pm\frac{2\pi}{m},\pm\frac{4\pi}{m},\ldots,\pm\frac{2\pi \lfloor\frac{m}{2}\rfloor}{m}\right\}$ be a
	phase-quantizer, which maps $\theta\in[-\pi,\pi)$
	to the nearest point in its range, 
	breaking ties by rotating counterclockwise. Specifically,
	\[Q_m(\theta)=\left\lfloor\frac{\theta+\frac{\pi}{m}}{\frac{2\pi}{m}}
	\right\rfloor\frac{2\pi}{m},\]
	in which $\lfloor\cdot\rfloor$ denotes the 
	floor function.
	Furthermore, for any $z\in\mathbb{C}$, let
	$\Theta_m:\mathbb{C}\rightarrow[\frac{-\pi}{m},\frac{\pi}{m})$ be defined as
	\[\Theta_m(z)\triangleq Q_m(\arg(z))-\arg(z).\]
	In another words, $\Theta_m(z)$ denotes the rotation angle which maps $z$ to the point 
	$|z|\exp(iQ_m(\arg(z)))$. 

	In Theorem~\ref{thm:cardinality}, the elements of $S$
	are not equal up to a phase offset. In order to weaken 
	this condition to have waveforms that are
	equal up to a phase offset but not everywhere,
	an auxiliary channel is introduced whose input is
	$y(t;m)$ and whose output is 
	$z(t;m)=\exp\left(i\Theta_m(b_0)\right)y(t;m).$
	As a result,
	\[z(t;m)=\sum_{k=-m}^{m}d_ke^{i2\pi kt},\]
	in which $d_k=\exp(i\Theta_m(b_0))b_k$.
	Fig.~\ref{fig:aux} shows the system, including the auxiliary channel.
		\begin{figure}
		\centering
			\includegraphics[scale=1]{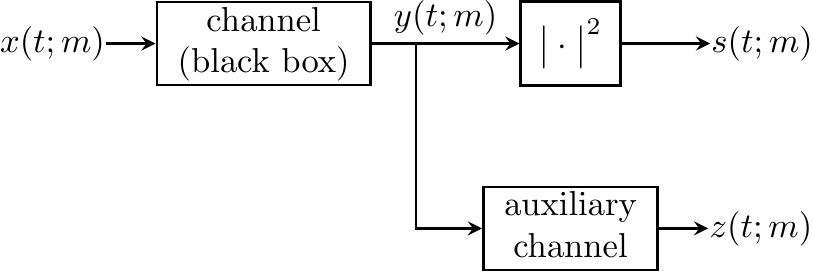}
			\caption{The relationship among different introduced waveforms.}
			\label{fig:aux}
		\end{figure}
	Let $z(t)\triangleq\lim_{m\rightarrow\infty}z(t;m)$.
	Then the system
	from $x(t)$ to $z(t)$ behaves like the coherent channel,
	{\it i.e.,} the system that observes $y(t)$. To see this, 
	for 
	$\bm{z}^{2m+1}\triangleq(d_{-m}\ldots d_m)\in\mathbb{C}^{2m+1},$
	let
	\[I(x(t;m);z(t;m))=\frac{I(\bm{x}^{2m+1};\bm{z}^{2m+1})}{2m+1},\]
	and
	\[I(x(t);z(t))=\lim_{m\rightarrow\infty}I(x(t;m);z(t;m)).\]
	Due to the chain rule for mutual information, we have
	\begin{align*}
		I(\bm{x}^{2m+1} ;\bm{y}^{2m+1},\bm{z}^{2m+1})&=
		I(\bm{x}^{2m+1} ;\bm{y}^{2m+1}) + I(\bm{x}^{2m+1};\bm{z}^{2m+1}\mid \bm{y}^{2m+1})\\
		&=I(\bm{x}^{2m+1} ;\bm{z}^{2m+1}) + I(\bm{x}^{2m+1};\bm{y}^{2m+1}\mid \bm{z}^{2m+1}).
	\end{align*}
	As $\bm{x}^{2m+1}$ --- $\bm{y}^{2m+1}$ --- $\bm{z}^{2m+1}$ is a Markov chain, we
	have $I(\bm{x}^{2m+1};\bm{z}^{2m+1}\mid \bm{y}^{2m+1})=0$, and as a result,
	\begin{align*}
		I(\bm{x}^{2m+1};\bm{y}^{2m+1}) &= 
		I(\bm{x}^{2m+1};\bm{z}^{2m+1}) + I(\bm{x}^{2m+1};\bm{y}^{2m+1}\mid \bm{z}^{2m+1})\\
		&=I(\bm{x}^{2m+1};\bm{z}^{2m+1}) + I(\bm{x}^{2m+1};\Theta_m(b_0)\mid \bm{z}^{2m+1}).
	\end{align*}
	By taking the limit as $m\rightarrow\infty$ we have	
	\begin{align*}
		I(x(t);y(t))=
		I(x(t);z(t))+
		\lim_{m\rightarrow\infty}\frac{I(\bm{x}^{2m+1};\Theta_m(b_0)\mid \bm{z}^{2m+1})}{2m+1}.
	\end{align*}
	Note that as $m\rightarrow\infty$, the interval which
	$\Theta_m(b_0)$ takes values in, {\it i.e.,} $[\frac{-\pi}{m},\frac{\pi}{m})$,
	shrinks to zero, which means that $\Theta_m(b_0)$ will take a deterministic value
	as $m\rightarrow\infty$. It implies that 
		\[\lim_{m\rightarrow\infty}\frac{I(\bm{x}^{2m+1};\Theta_m(b_0)\mid \bm{z}^{2m+1})}{2m+1}=0.\]
	
	As the channel from $x(t;m)$ to $z(t;m)$
	behaves like coherent channel when $m\rightarrow\infty$,
	instead of finding bounds for $I(\bm{x}^{2m+1};\bm{s}^{2m+1})$
	in terms of $I(\bm{x}^{2m+1};\bm{y}^{2m+1})$, we find 
	bounds in terms of $I(\bm{x}^{2m+1};\bm{z}^{2m+1})$.

		By using the chain rule for the mutual information we have
	\begin{align*}
		I(\bm{x}^{2m+1};\bm{z}^{2m+1},\bm{s}^{2m+1})&=
		I(\bm{x}^{2m+1};\bm{z}^{2m+1})+I(\bm{x}^{2m+1};
		\bm{s}^{2m+1}\mid\bm{z}^{2m+1})\\
		&=I(\bm{x}^{2m+1};\bm{s}^{2m+1})+I(\bm{x}^{2m+1};
		\bm{z}^{2m+1}\mid\bm{s}^{2m+1}).
	\end{align*}
	Note that $|z(t;m)|=|y(t;m)|$ and, as a result, 
	$s(t;m)=|z(t;m)|^2$. Consequently, 
		$\bm{x}^{2m+1}\text{ --- }\bm{z}^{2m+1}
		\text{ --- }\bm{s}^{2m+1}$ form a Markov chain.
		This implies that
		\[I(\bm{x}^{2m+1};\bm{s}^{2m+1}\mid\bm{z}^{2m+1})=0,\] 
		and as a result,
	\begin{align}
		\label{eq:equality}
		I(\bm{x}^{2m+1};\bm{s}^{2m+1})=
		I(\bm{x}^{2m+1};\bm{z}^{2m+1})-I(\bm{x}^{2m+1};
		\bm{z}^{2m+1}\mid\bm{s}^{2m+1}).
	\end{align}
		According to Theorem~\ref{thm:cardinality},
		for a particular $y_{p,m}(t)$
		and up to a constant phase ambiguity, there
		are at most $2^{2m+1}$ functions in $V_m$
		that have the same magnitude as $y_{p,m}(t)$.
		We have $y(t;m)=y_{p,m}(t)\rect(t)$,
		so up to a multiplication by some $\tau\in\mathbb{T}$,
		there are at most $2^{2m+1}$ waveforms
		of the form
		\[\left(\sum_{k=-m}^{m}g_ke^{i2\pi kt}\right)
		\rect(t),\quad g_k\in\mathbb{C},\]
		which have the same magnitude as $y(t;m)$, hence as
		$z(t;m)$.
		As a result, for the system shown in
		Fig.~\ref{fig:aux}, for a given $s(t;m)$,
		there are at most $m2^{2m+1}$
		possibilities for $z(t;m)$, where the $m$ factor multiplying $2^{2m+1}$
		is due to the $m$ possibilities for $\arg(d_0)$. Consequently,
		\begin{align*}
			I(\bm{x}^{2m+1};\bm{z}^{2m+1}\mid\bm{s}^{2m+1})
			\leq H(\bm{z}^{2m+1}\mid\bm{s}^{2m+1})
			\leq 2m+1+\log(m),
		\end{align*}
	in which $H$ denotes the entropy function.
	As a result, from (\ref{eq:equality})
	and by using the data-processing inequality, we have
	\begin{align*}
		I(\bm{x}^{2m+1};\bm{z}^{2m+1})-\left(2m+1+\log(m)\right)
		\leq I(\bm{x}^{2m+1};\bm{s}^{2m+1})
		\leq I(\bm{x}^{2m+1};\bm{y}^{2m+1}),
	\end{align*}
	thus,
	\begin{align}
		I(x(t;m);z(t;m))-1-\frac{\log(m)}{2m+1}\leq I(x(t;m);s(t;m))
		\leq I(x(t;m);z(t;m)).
		\label{eq:eq1}
	\end{align}
	By taking the limit as $m\rightarrow\infty$, (\ref{eq:eq1}) reduces to
	\begin{equation*}
		I(x(t);z(t))-1\leq I(x(t);s(t))\leq I(x(t);z(t)),
	\end{equation*}
	and as a result
	\begin{equation}
		I(x(t);y(t))-1\leq I(x(t);s(t))\leq I(x(t);y(t)).
		\label{eq:final}
	\end{equation}
	Let $p(x(t))$ denote the probability density function of 
	$x(t)$, and define 
	\[p_1\triangleq\underset{p(x(t))}{\arg\max}\quad I(x(t);y(t)),\]
	and
	\[p_2\triangleq\underset{p(x(t))}{\arg\max}\quad I(x(t);s(t)).\]
	Correspondingly, let $I_1(\cdot;\cdot)$ and
	$I_2(\cdot;\cdot)$ denote the mutual information,
	computed by $p_1$ and $p_2$, respectively. Then,
	by (\ref{eq:final}) and the definitions of $p_1$
	and $p_2$, we have
	\begin{align}
		I_1(x(t);y(t))-1\leq I_1(x(t);s(t))\leq I_2(x(t);s(t))
		\leq I_2(x(t);y(t))\leq I_1(x(t);y(t)).
		\label{eq:inequalities}
	\end{align}
	The channel capacity under coherent detection is 
	$C_\text{coh}\triangleq I_1(x(t);y(t)),$
	and under SLD it is 
	$C_\text{sld}\triangleq I_2(x(t);s(t)),$
	so we have
	\begin{equation}
		C_\text{coh}-1\leq C_\text{sld}\leq C_\text{coh}.
		\label{eq:capacity}
	\end{equation}
\end{section}

\begin{section}{Discussion}
	\label{sec:discussions}
	In Sec.~\ref{sec:setup}, we made the assumption that the
	support of $x(t)$ and $y(t)$ is limited to $[0,1)$. Restricting 
	the time interval to $[0,1)$ does not affect (\ref{eq:capacity}), 
	as in the general case, we may assume that their support 
	is $[t_1,t_2)$, for $t_1<t_2$. Then we can write $y(t)$ as 
	\[y(t)=y_{\hat{p}}(t)\rect\left(\frac{t-t_1}{t_2-t_1}\right),\]
	in which 
	\[y_{\hat{p}}=\sum_{k=-\infty}^{\infty}y(t-k(t_2-t_1))\]
	is the periodic extension of $y(t)$ with period $t_2-t_1$.
	Note that the Fourier series of $y_{\hat{p}}$ is expressed in terms of 
	$\exp\left(i\frac{2\pi k}{t_2-t_1}t\right)$,
	instead of
	$\exp\left(i2\pi kt\right)$.
	Then the
	computations are similar to the ones done for the  support $[0,1)$. 

	Although (\ref{eq:capacity}) is derived for square-law detection, the capacity bounds are true for any invertible function of $|y(t)|$,
	as well. 
	An example in which we may measure some other functions of $|y(t)|$ than $s(t)$  
	is the {\it direct detection} of optical waveform, using a photo-diode.
	Generally, diodes have a non-linear input-output relationship, in which, in certain 
	operating regimes, it might be approximated by some simple functions,
	{\it e.g.,} quadratic function. While this approximation works in those specific
	regimes, it might fail in some other. However, as long as the measurement
	is an invertible function
	of the magnitude waveform, the discussed concepts are still true.
\end{section}

\bibliography{references}
\bibliographystyle{IEEEtran}
\end{document}